\newtheorem{theorem}{Theorem}
\newtheorem{lemma}[theorem]{Lemma}
\newtheorem{claim}{Claim}
\newcommand{\OPT}{\emph{OPT}\xspace}
\newcommand{\MSPEC}{\emph{MSPEC}\xspace}
\newcommand{\MSS}{\emph{MSS}\xspace}
\newcommand{\UDG}{\emph{UDG}}
\title{Minimum Shared-Power Edge Cut \footnote{This work was partially supported by NSERC and by the Slovenian Research Agency (P1-0297).}}
\author{
Sergio Cabello\thanks{University of Ljubljana and IMFM, \texttt{sergio.cabello@fmf.uni-lj.si}.}
\and
Kshitij Jain\thanks{University of Waterloo, \texttt{k22jain@uwaterloo.ca}.}
\and
Anna Lubiw \thanks{University of Waterloo, \texttt{alubiw@uwaterloo.ca}.}
\and
Debajyoti Mondal \thanks{University of Saskatchewan, \texttt{dmondal@cs.usask.ca}.}
}
\begin{document}

\clearpage
\maketitle
\thispagestyle{empty}

\begin{abstract}

We introduce a problem called the Minimum Shared-Power Edge Cut (\MSPEC).  The input to the problem is an undirected edge-weighted graph with distinguished vertices $s$ and $t$, and the goal is to find an $s$-$t$ cut by assigning ``powers'' at the vertices and removing an edge if the sum of the powers at its endpoints is at least its weight. The objective is to minimize the sum of the assigned powers.

\MSPEC is a graph generalization of a barrier coverage problem in a wireless sensor network: given a set of unit disks with centers in a rectangle, 
what is the  minimum total amount by which we must shrink the disks to permit
an intruder to cross the rectangle undetected, i.e.~without entering any disc.
This is a more sophisticated measure of barrier coverage than the minimum number of disks whose removal breaks the barrier.

We develop a fully polynomial time approximation scheme (FPTAS) for \MSPEC. We give polynomial time algorithms for the special cases where the edge weights are uniform, or the power values are restricted to a bounded set. Although \MSPEC is related to network flow and matching problems, its computational complexity (in P or NP-hard) remains open. 
\end{abstract}

\newpage
\setcounter{page}{1}

\section{Introduction}

Minimum weight edge cuts in graphs are very well-studied. In this paper we look at a variation that arises from unit disk graphs and other situations where the edges of the graph are determined by geometric properties of the vertices. In this variation, we assign a \emph{power} $p_v$ to each vertex $v$ and an edge $e=(u,v)$ is removed if $p_u + p_v$ is at least the weight of edge $e$.  The goal is to remove the edges of an $s$-$t$ cut while minimizing the power sum.  More formally, the Minimum Shared-Power Edge Cut Problem (\MSPEC) is defined as follows:

\smallskip
{\bf Input:} Graph $G=(V \cup \{s,t\},E)$ with $n$ vertices, $m$ edges, and a non-negative weight $w_{u,v}$ on each edge $(u,v) \in E$.

{\bf Problem:} Assign a non-negative power $p_v$ to each $v \in V$ and assign $p_s=p_t=0$ so that removing the edge set $\{(u,v) \in E : p_u + p_v \ge w_{u,v} \}$ disconnects 
$s$ and $t$ and $\sum_{v \in V} p_v$ is minimized.
\smallskip

Our main result is a fully polynomial time approximation scheme (FPTAS) for the Minimum Shared-Power Edge Cut Problem.  

A special case of \MSPEC---and our original motivation for studying it---is the problem of measuring barrier coverage of a sensor network.  A sensor network is typically modelled as a collection of unit disks in the plane, where each disk represents the sensing region of its corresponding sensor~\cite{Wang:2011:CPS:1978802.1978811}. 
The network provides a \emph{barrier} between regions $R_1$ and $R_2$ if every path from $R_1$ to $R_2$ intersects the union of the disks. 
One simple measure of barrier coverage is the minimum number of disks whose removal permits a path from $R_1$ to $R_2$ in the free space outside the disks. 
This can be computed in polynomial time~\cite{kumar2007barrier} for the special case of a \emph{rectangular barrier}, where the sensors lie in a rectangle and must block paths from the bottom to the top of the rectangle.
We suggest, as a more sophisticated measure, the minimum total amount by which we must shrink the disks to permit such a path. 
This measure reflects the reality that sensor strength typically deteriorates (``attenuates'') with distance from the sensor~\cite{Wang:2011:CPS:1978802.1978811}. 
For a rectangular barrier, our Minimum Shrinkage problem (as in Figure~\ref{fig:barrier}) can be modelled as \MSPEC, where we have a vertex for each disk, and the power assigned to a vertex  
tells us how much to shrink the corresponding disk.
Thus our main result provides an approximation scheme to compute Minimum Shrinkage for rectangular barrier coverage.

\begin{figure}
\captionsetup[subfigure]{justification=centering,labelfont=normal}

    \centering
    \begin{subfigure}[b]{0.4\textwidth}
            \centering
            \includegraphics[width=\textwidth]{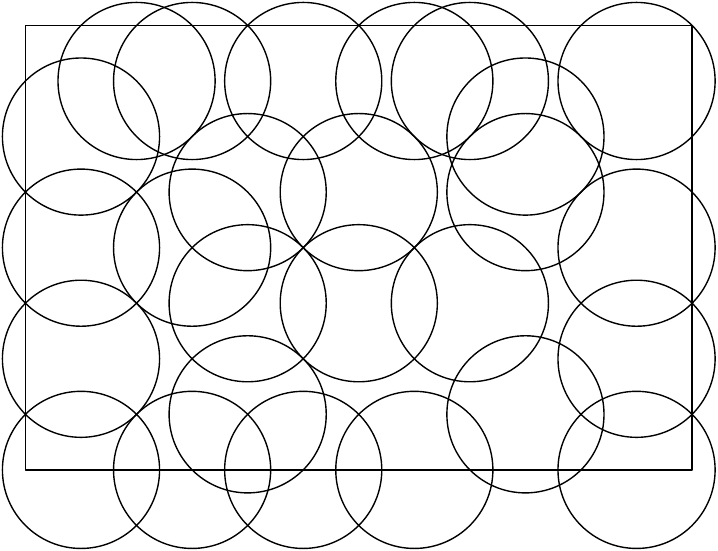}
    \caption{\label{fig:disks}}
    
    \end{subfigure}
\begin{subfigure}[b]{0.4\textwidth}
            \centering
            \includegraphics[width=\textwidth]{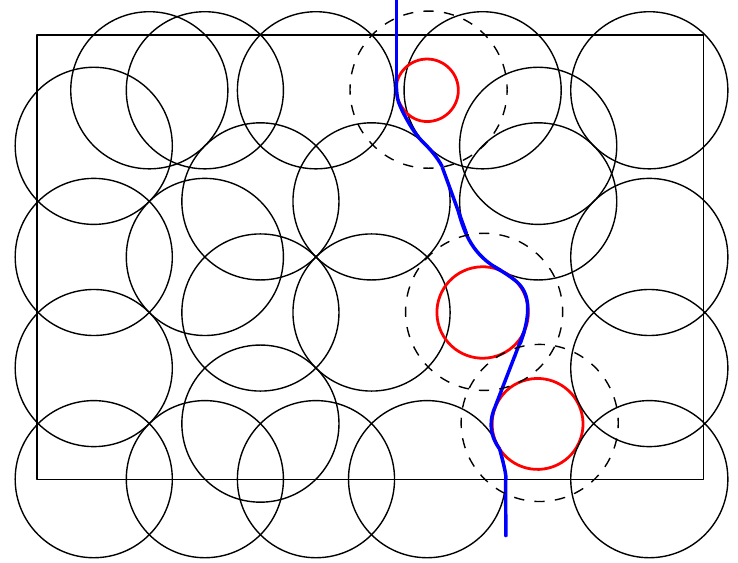}
    \caption{ \label{fig:shunk-disks}}
   
    \end{subfigure}
    \caption{Minimum shrinkage for the rectangular barrier coverage problem:  (\subref{fig:disks}) shows the rectangular barrier and the disks representing sensors;  (\subref{fig:shunk-disks}) shows a path from bottom to top in the free space after shrinking three of the disks. 
    }
    \label{fig:barrier}
\end{figure}

\subsection{Related Models and Work}

Motivated by problems in design of wireless networks, there
is substantial previous work on graph optimization problems in which edges are selected depending on ``power'' values $p_v$ that are assigned to each vertex $v$.  The objective is to minimize the sum of the powers, while satisfying  
connectivity properties for the selected edges that are of interest in the design of ad-hoc wireless networks. For example, 
the selected edges form a spanning tree, or a $k$-connected subgraph.  
The models that have been considered for selecting, or ``activating'', an edge $(u,v)$ of weight $w_{u,v}$  are: 
\begin{enumerate}
    \item $\min\{p_u, p_v\} \ge w_{u,v}$ (``Power Optimization'', e.g.~\cite{hajiaghayi2007power})
    \item $\max\{p_u, p_v\} \ge w_{u,v}$ (\cite{angel2015min}) 
    \item $p_u + p_v \ge w_{u,v}$ (``Installation Cost''~\cite{Panigrahi:2011:SND:2133036.2133114}) 
    \item a more general function of $p_u$ and $p_v$ (``Activation Networks''~\cite{Panigrahi:2011:SND:2133036.2133114})
\end{enumerate}

The only previous work in Activation Networks on our model (3) is an $O(\log n)$ approximation algorithm for minimum spanning tree, see~\cite{Panigrahi:2011:SND:2133036.2133114}.  
The only Activation result for the $s$-$t$ cut problem in any of these models is due to Angel et al.~\cite{angel2015min} who use model (2) and reduce the problem to a conventional min cost $s$-$t$ cut problem, which can then be solved in polynomial time.  They heavily use the fact that in model (2) the power assigned to a vertex will be equal to the weight of an incident edge.  This property does not hold for our model, nor do we assume (as is done in model (4)) a discrete set of power values. 
In Section \ref{subsec:polynomial domain}
we show that with that assumption, our problem can be solved in polynomial time.

In the most well-studied model, called ``Power Optimization,'' an edge $(u,v)$ with weight $w_{u,v}$ is selected, or ``activated'', if $p_u$ and $p_v$ 
are both at least $w_{u,v}$, i.e., $\min\{p_u, p_v\} \ge w_{u,v}$.  
This makes the problems easier because the power at a vertex will be equal to the 
weight of one of its incident edges, so the possible powers form a discrete set.
The literature on Power Optimization includes approximation algorithms for minimum spanning tree, Steiner forest, $k$-vertex or $k$-edge connected subgraph (all-pairs, single source, or $s$-$t$), and various degree-constrained and edge-cover problems~\cite{althaus2006power,hajiaghayi2007power,lando2010minimum,cohen2015approximating}.

In a less well-studied scenario,  Angel et al.~\cite{angel2015min} defined an edge $(u,v)$ to be activated if $\max \{p_u,p_v\}$ $ \ge w_{u,v}$.  
Again, the power at a vertex will be equal to the 
weight of one of its incident edges, so there are only a discrete set of possible powers that may be assigned to a node. 
Using this property, Angel et al.~reduced the problem of finding an optimum $s$-$t$ cut in this model to the conventional minimum cut 
problem.  
For their model, they also gave a polynomial time algorithm for optimum $s$-$t$ path, a 2-approximation for vertex cover, and a proof that spanning tree is hard to approximate. 

Our model, where an edge $(u,v)$ with weight $w_{u,v}$ is selected if $p_u + p_v \ge w_{u,v}$, was called the ``installation cost'' model by Panigrahi~\cite{Panigrahi:2011:SND:2133036.2133114} who reported an $O(\log n)$ approximation algorithm for minimum spanning tree~\cite{panigrahi2008minimum}.
He proved that this approximation factor is best possible assuming P $\ne$ NP~\cite{Panigrahi:2011:SND:2133036.2133114}.
We are unaware of any other work specifically on this model.

Panigrahi~\cite{Panigrahi:2011:SND:2133036.2133114} formulated a generalization of all these power requirements called ``Activation Networks,'' where there is a general 0-1 function $f$ on pairs of vertex powers and an edge $(u,v)$ is activated if $f(p_u,p_v) = 1$. In fact, his model is more general in that it allows for different functions $f$ for different pairs of vertices.  However, the model has a significant restriction in that the power values are constrained to a discrete set $D$ whose size is assumed to to be polynomially bounded in $n$.  
Panigrahi gave $O(\log n)$ approximation algorithms  
for various network survivability problems in this setting, such as minimum spanning tree, minimum Steiner forest, and $2$-edge/vertex connectivity.
These algorithms were improved in several follow-up papers~\cite{nutov2013survivable,alqahtani2013approximation}.

\medskip
\noindent\textbf{Organization.}
The remainder of our paper is organized as follows.  Section~\ref{sec:formulations} contains some alternative formulations of \MSPEC. The barrier coverage application is treated in Section~\ref{sec:MSS}.  Our approximation scheme for \MSPEC is in Section~\ref{sec:FPTAS}.  This is preceded by Section~\ref{sec:bottleneck} on a bottleneck version of the problem, and followed by Section~\ref{sec:speedup} on speeding up the approximation algorithm.  Some more general and some restricted versions of \MSPEC are considered in Section~\ref{sec:variations}, and Section~\ref{sec:conclusions} concludes the paper.

\section{Alternative Formulations of \MSPEC}
\label{sec:formulations}

In this section we give two alternative formulations of the \MSPEC problem.

In \MSPEC we are searching for an edge cut.  The set of edges in an $s$-$t$ cut form a bipartite graph.  Given an edge cut, the power assignment that will remove those edges corresponds to a minimum ``$w$-vertex cover'' \cite[chapter 17]{schrijver2002combinatorial}, that is, an assignment of weights $y_v$ to the vertices so that $y_u + y_v \ge w_{u,v}$ for all edges $(u,v)$ in the cut. 
In a bipartite graph the minimum $w$-vertex cover is dual to the maximum weight matching. Therefore, \MSPEC  can be alternatively stated as: given an edge-weighted graph, partition the vertices into two sets with $s$ in one set and $t$ in the other, and minimize the weight of a maximum matching of the edges crossing between the two sets, e.g, see Figure~\ref{fig:gap}(a).

We can also formulate \MSPEC as an integer program (ILP). Although we do not use it in our algorithms, it may lead to future developments. Let $\Pi_{st}$ be the set of all paths in $G$ from $s$ to $t$. We will have a non-negative variable $p_u$ for each $u \in V $ and a $0$-$1$ variable $x_{u,v}$ for each edge $(u,v) \in E$ with the intended interpretation that the cut edges have $x_{u,v}=1$.
\begin{ceqn}
\begin{equation}
\label{eqn:ILP}
	\begin{array}{ll@{\hspace{2mm}}ll}
			& \min \displaystyle \sum_{u \in V} p_u & \\
			& \textit{s.t.} \displaystyle \sum_{(u,v) \in \pi} x_{u,v} \geq 1 & \forall \pi \in \Pi_{st} \\
			& p_u + p_v \geq w_{u,v} x_{u,v} &\forall (u,v) \in E \\
			& x_{u,v} \in \{0,1\} & \forall (u,v) \in E \\
			& p_u \geq 0 \hspace{1mm} & \forall u \in V \\
			& p_s = p_t = 0
	\end{array}
\end{equation}
\end{ceqn}

\begin{figure}
	\centering
	\includegraphics[width=\textwidth]{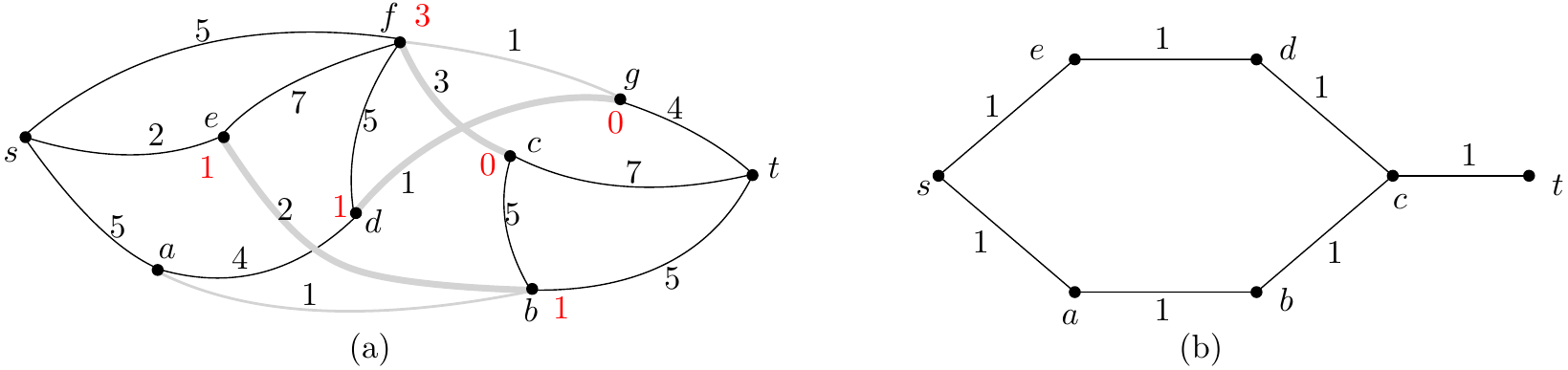}
	\caption{(a) Illustration for \MSPEC.  The powers on the vertices are shown in red.
	The cut edges are in gray and the edges of the maximum matching are shown in bold gray.
	(b) The integrality gap of ILP~(\ref{eqn:ILP}) is 2 for this graph.}
	\label{fig:gap}
\end{figure}

This integer programming formulation has an integrality gap of at least $2$. Consider the graph shown in Figure \ref{fig:gap}(b).  
It is easy to verify that the optimum integral solution has cost $1$. On the other hand, if we assign $p_c = 0.5$ and $x_{b,c}=0.5, x_{c,d}=0.5$ and $x_{c,t}=0.5$, this is a feasible solution for the LP relaxation of the above ILP. 
The integrality gap of this ILP is thus $ \geq \frac{1}{0.5}=2$.


\section{Barrier Coverage in a Sensor Network}
\label{sec:MSS}

Our motivation for studying the Minimum Shared-Power Edge Cut problem comes from a problem of measuring barrier coverage in a sensor network.
A sensor network consists of a set of sensors, where
each sensor is modelled as a unit disk (a disk of radius 1) in the plane.
The location of a sensor is the center point of its unit disk, and the sensor  
detects points within its unit disk. 

A sensor network provides a barrier between one region of the plane, $R_1$, and another region, $R_2$, if 
any point that travels from $R_1$ to $R_2$ will be detected by the sensors, i.e.,~there is no path between the two regions that remains outside all the disks. 

In a \emph{rectangular barrier} the sensors are located at points in a rectangle and the sensor network must detect any path that crosses the rectangle from below to above.  
Kumar et al.~\cite{kumar2007barrier} introduced this concept and suggested measuring ``barrier coverage'' as the minimum value $k$ such that every path that crosses the barrier intersects at least $k$ sensor disks.  Equivalently, the barrier coverage is the minimum number of sensors that must be removed to allow a path in the free space between disks.
For a rectangular barrier, Kumar et al.~showed that barrier coverage can be computed in polynomial time by applying Menger's theorem 
to the intersection graph of the disks---barrier coverage becomes the minimum size of a vertex cut, which is equal to the maximum number of vertex disjoint paths that go from the left of the rectangle to the right of the rectangle.

A number of alternative measures of barrier coverage have been proposed in the literature with the (conflicting) goals of having a measure that models reality and that can be computed efficiently. 
We propose a new measure called Minimum Shrinkage.  Like some of the previous measures (see the following section for more background) 
it models the fact that a sensor's ability to detect points decreases with distance from the sensor.
Our approximation algorithm for Maximum Shared-Power Edge Cut can be applied to compute Minimum Shrinkage for rectangular barrier coverage.

Our new measure is defined in Section~\ref{sec:MS-definition}, and the reduction to \MSPEC is given in Section~\ref{sec:MS-to-MSPEC}.  
We begin with further background on barrier coverage in 
Section~\ref{sec:barrier-background}.  

\subsection{Background on Barrier Coverage}
\label{sec:barrier-background}

As mentioned above, Kumar et al.~\cite{kumar2007barrier} introduced the idea of measuring barrier coverage by the minimum number of sensors whose removal permits a path from region $R_1$ to region $R_2$ in the free space outside the sensor disks.  

Bereg and Kirkpatrick~\cite{bereg2009approximating} applied this measure more broadly and called it ``resilience''.  
They introduced a new measure of barrier coverage called \emph{thickness} which is defined to be the minimum, over all paths from $R_1$ to $R_2$, of the number of times the path enters a disk, counting repeats.  
Whereas resilience seems to be hard to compute except for the case of a rectangular barrier, thickness can easily be computed in polynomial time using weighted shortest path algorithms. Bereg and Kirkpatrick showed that thickness provides a constant-factor approximation to resilience. 
Korman et al.~\cite{KormanLSS13} provided fixed-parameter tractable algorithms (parameterized by the resilience) and a PTAS when each point of the plane is covered by a few disks.

The above measures implicitly assume that a sensor can detect points uniformly across its unit disk.  In reality, the power of a sensor decreases with distance, and a sensor can detect closer points more easily than farther points.  
In order to take distance from the sensors into account, we can think of diminishing the power of the sensors, or shrinking their unit disks.
Meguerdichian et al.~\cite{meguerdichian-2001} formulated this as a ``maximum breach path''---to 
find the maximum value $d$ such that there is a path from region $R_1$ to $R_2$ where every point of the path is at least distance $d$ from every sensor point.
They showed that a maximum breach path should travel on the Voronoi diagram of the sensor points, and thus can be computed in time $O(n \log n)$.
The maximum breach measure is equivalent to asking for the maximum value $d$ such that shrinking all the sensor disks to radius $d$ permits a path in the free space between the shrunken disks.

In a related paper Megerian et al.~\cite{megerian2002exposure} introduced the notion of ``exposure''.  This measure takes into account, not only how close the path goes to the sensor points, but also the amount of time that is spent close to sensor points.
More formally, the ``[all] sensor field intensity'' at a point is the sum of all the sensor's powers at that point, and ``exposure'' along a path is the integral of the sensor field intensity along the path. Computing a path of minimum exposure is a very  difficult continuous problem.   Djidjev~\cite{djidjev2010approximation} gave an approximation algorithm, using the idea of discretizing the domain.  
For the special case of a rectangular barrier, there is a min-max formula that relates the minimum exposure of a path from bottom to top of the rectangle and a maximum flow from left to right of the rectangle---see Strang~\cite{strang1983maximal} and Mitchell~\cite{mitchell1990maximum}.   

For further background refer to Wang's survey on coverage problems in sensor networks~\cite{Wang:2011:CPS:1978802.1978811} (in particular, Section 6).

\subsection{Minimum Shrinkage}
\label{sec:MS-definition}

We propose a new measure of barrier coverage, called \emph{minimum shrinkage}.
This new measure models the reality that a sensor's detection ability drops off with distance.   

To \emph{shrink} a unit disk by amount $s_i$ for $0 \le s_i \le 1$ means to decrease its radius by $s_i$, i.e.,~to replace the unit disk by a disk of radius $(1-s_i)$. 
Let $c_i$, $i = 1, \ldots, n$ be the locations of unit disk sensors in a sensor network that is supposed to act as a barrier 
between region $R_1$ and region $R_2$ in the plane.  
The \emph{minimum shrinkage} of the sensor network is the minimum $\sum s_i$ such that if we shrink the $i^{\rm th}$ sensor disk by $s_i$, $1 \le i \le n$, then the network no longer provides a barrier between regions $R_1$ and $R_2$, i.e., there is a path from $R_1$ to $R_2$ in the free space between the shrunken disks.
See Figure~\ref{fig:barrier}.

\subsection{Reduction from Minimum Shrinkage to \MSPEC}
\label{sec:MS-to-MSPEC}

We will show that the Minimum Shrinkage problem for rectangular barrier coverage can be formulated as an \MSPEC problem.

Given a set of $n$ points $P$ in $\mathbb{R}^2$, the \emph{unit disk graph} induced by $P$, denoted \UDG($P$), is an embedded graph with vertex set $P$ and an edge $(u,v)$ when the unit disks centered at $u$ and $v$ intersect, i.e., the edge set is $\{(u,v) : u,v \in P $ and dist$(u,v) \leq 2 \}$.

An instance of the Minimum Shrinkage problem for rectangular barrier coverage consists of 
a set of points $P$ inside a rectangle $R$.  To reduce to \MSPEC we start with the graph \UDG($P$).  Define the weight of edge $(u,v)$ to be $(2-\text{dist}(u,v))$. This is the amount of shrinkage at $u$ and $v$ that is needed to make the disks at $u$ and $v$ non-intersecting.
We add two special vertices $s$ and $t$.  Vertex $s$ is connected to every vertex whose unit disk intersects the left boundary of $R$, and vertex $t$ is connected to every vertex whose unit disk intersects the right boundary of $R$. 
To define the weight of these edges, let $R_l$ and $R_r$ be the lines through the left and right boundaries of $R$, respectively, 
and, for line $L$, let $\text{dist}(u,L)$ be the distance from point $u$ to line $L$.  
Define the weight of an edge $(s,u)$ to be $(1-\text{dist}(u, R_l))$.  This is the amount of shrinkage needed to make the disk at $u$ not intersect the left boundary of $R$.
Similarly, define the weight of an edge $(t,u)$ to be  $(1-\text{dist}(u, R_r))$.
It is straight-forward to show that \MSPEC on the resulting graph solves the \MSS problem where we interpret $p_v$ as the amount to shrink the disk centered at $v$.

Observe that this reduction still works (with obvious modifications) for the more general problem where each sensor disk has a specified radius, i.e.~the radii are not uniform.

\section{Minimum Bottleneck Shared-Power Edge Cut}
\label{sec:bottleneck}

To approximately solve the minimum shared-power edge cut problem, we will need a solution to another closely related problem which we refer to as the ``Minimum Bottleneck Shared-Power Edge Cut Problem". 
This problem is very similar to \MSPEC in that we want to assign powers to the vertices such that $s$ and $t$ become disconnected if we remove any edge where the sum of the powers on its endpoints is at least its weight.
The key difference is that instead of assigning different powers to every vertex we will assign the same power to each vertex in $V$ and minimize this ``bottleneck'' power. 
More precisely, we want the minimum value $p$ such that $s$ and $t$ become disconnected if we remove the edges $(u,v), u,v \in V$ with $2p \ge w_{u,v}$ and the edges $(u,v), u \in V, v \in \{s,t\}$ with $p \ge w_{u,v}$.

In the case of barrier coverage for a rectangular barrier, the minimum bottleneck shared-power edge cut is equivalent to the ``maximum breach path'' measure introduced by Meguerdichian et al.~\cite{meguerdichian-2001}.  Namely, we want to shrink all sensor disks by the same (minimum) amount to permit a path in the free space between disks. Meguerdichian et al.~computed the maximum breach path in polynomial time using Voronoi diagrams.  It is interesting that the problem can be solved in a more general non-geometric setting.

In the following we show how to obtain an exact solution to the minimum bottleneck shared-power edge cut in polynomial time. Let the set of 
removed edges in the optimal solution be $M^*$ and the optimal power be $p^*$.

\begin{claim}
There is an edge $e=(u,v) \in M^*$ which is \emph{tight} for $p^*$, i.e.,~$w_{u,v} = 2p^*$ if $u,v \in V$ or $w_{u,v} = p^*$ if $u \in V$ and $v \in \{s,t\}$.
	\label{claim:tight}
\end{claim}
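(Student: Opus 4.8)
The plan is to argue by contradiction: suppose no edge in $M^*$ is tight for $p^*$. The intuition is that $M^*$ together with $p^*$ gives a valid bottleneck cut, but if every removed edge has strictly more power than it needs, we can lower the common power $p$ slightly without ``un-removing'' any edge, contradicting the minimality of $p^*$.

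First I would make precise which edges get removed at power level $p$. Define the removed set $M(p) = \{(u,v) : u,v \in V, \ 2p \ge w_{u,v}\} \cup \{(u,v) : u \in V, v \in \{s,t\}, \ p \ge w_{u,v}\}$. The optimal solution removes exactly $M^* = M(p^*)$, and this set disconnects $s$ from $t$. Now suppose for contradiction that no edge of $M^*$ is tight, meaning every interior edge $(u,v) \in M^*$ satisfies $2p^* > w_{u,v}$ and every boundary edge satisfies $p^* > w_{u,v}$. Then for each edge in $M^*$ there is a strictly positive slack, so I can take $\varepsilon$ to be a value small enough that decreasing the power to $p' = p^* - \varepsilon$ preserves all the strict inequalities; concretely, $\varepsilon$ can be chosen as half the minimum slack over the finitely many edges in $M^*$. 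This guarantees $M^* \subseteq M(p')$, so $M(p')$ still contains a disconnecting set and therefore also disconnects $s$ from $t$.

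The consequence is that $p' = p^* - \varepsilon$ is a feasible bottleneck power strictly smaller than $p^*$, contradicting the optimality (minimality) of $p^*$. This forces the existence of at least one tight edge, which is exactly the claim. I should also handle the degenerate case $p^* = 0$ separately if needed: if $p^* = 0$ is feasible then $s$ and $t$ are already disconnected using only edges of weight $0$ (since $2 \cdot 0 \ge w_{u,v}$ requires $w_{u,v} = 0$), and any such removed edge is vacuously tight with $w_{u,v} = 0 = 2p^*$; but if no edge needs to be removed at all, the claim's hypothesis that $M^*$ is nonempty should be invoked, since the problem only makes sense when $s$ and $t$ are initially connected.

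I do not expect any genuine obstacle here, as this is a standard minimality/slack argument. The only point requiring a little care is the bookkeeping that distinguishes the two edge types (interior edges, with factor $2p$, versus boundary edges incident to $s$ or $t$, with factor $p$), so that the slack is measured correctly in each case and the single $\varepsilon$ works uniformly. Because $M^*$ is finite, taking a minimum over its edges is well-defined and the chosen $\varepsilon$ is strictly positive, so the argument goes through cleanly.
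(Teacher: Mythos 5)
Your proof is correct and follows essentially the same argument as the paper: assume no edge of $M^*$ is tight, observe that every removed edge then has strictly positive slack, and decrease the common power by a small $\varepsilon$ to contradict the minimality of $p^*$. Your explicit choice of $\varepsilon$ as half the minimum slack and your treatment of the degenerate case $p^*=0$ are slightly more careful than the paper's one-line version, but the underlying idea is identical.
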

\begin{proof}[Proof]
	Suppose $w_{u,v} < 2p^*$ for all edges in $M^*$ with both end points in $V$ and $w_{u,v} < p^*$ for all edges in $M^*$ with one end point in $V$ .  Then we can reduce $p^*$ to $(p^*-\varepsilon)$ 
	and still ensure that $w_{u,v} \leq 2(p^*-\varepsilon)$ for all edges in $M^*$ with both end points in $V$ and $w_{u,v} \leq p^*-\varepsilon$ for edges in $M^*$ with one end point in $V$, so $p^*$ was not the minimum bottleneck power.
\end{proof}

\begin{theorem}
    Minimum Bottleneck Shared-Power Edge Cut can be solved optimally in $O((n+m)\log m)$ time.
    \label{theorem:bottleneck}
\end{theorem}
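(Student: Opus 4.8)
The plan is to exploit a monotonicity property together with Claim~\ref{claim:tight} to reduce the search for $p^*$ to a short list of candidate values, and then to binary search over that list using connectivity tests.

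First I would observe that the set of removed edges grows monotonically with $p$: if $p \le p'$, then every edge removed at power $p$ is also removed at power $p'$, since the conditions $2p \ge w_{u,v}$ (for $u,v \in V$) and $p \ge w_{u,v}$ (for $v \in \{s,t\}$) are preserved as $p$ increases. Hence the graph remaining after removal at $p'$ is a subgraph of the one remaining at $p$, so if $s$ and $t$ are disconnected at power $p$ they remain disconnected at every $p' \ge p$. Thus the predicate ``$s$ and $t$ are disconnected at power $p$'' is monotone in $p$, and $p^*$ is precisely the threshold at which it flips from false to true.

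Next I would form the candidate set $C$ consisting of $w_{u,v}/2$ for each edge $(u,v)$ with $u,v \in V$, together with $w_{u,v}$ for each edge $(u,v)$ with $v \in \{s,t\}$; this has $O(m)$ values. By Claim~\ref{claim:tight}, some edge of the optimal removed set is tight for $p^*$, which means $p^* \in C$. Combined with the monotonicity above, $p^*$ is the smallest element of $C$ at which $s$ and $t$ become disconnected. Finally I would sort $C$ in $O(m \log m)$ time and binary search over it: for a trial value $p$, delete the edges whose tightness threshold is at most $p$ and run a single breadth-first search from $s$ in $O(n+m)$ time to test whether $t$ is reachable. Monotonicity guarantees the predicate is a step function along the sorted list, so the binary search locates the flip point in $O(\log m)$ iterations, giving a total running time of $O(m \log m) + O((n+m)\log m) = O((n+m)\log m)$.

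I do not expect a genuinely hard step here; the two key ideas (monotonicity of connectivity in $p$, and the claim pinning $p^*$ to a finite candidate set) do essentially all of the work, and both are already in place. The only point requiring care is bookkeeping the two edge types consistently---interior edges use the threshold $w_{u,v}/2$ while boundary edges use $w_{u,v}$---both when building $C$ and when deciding, for a given trial $p$, exactly which edges to delete. Once these thresholds are handled uniformly, correctness follows immediately from the monotone predicate and Claim~\ref{claim:tight}, and the stated time bound follows from the single sort plus $O(\log m)$ linear-time reachability tests.
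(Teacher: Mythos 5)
Your proposal is correct and follows essentially the same route as the paper: the candidate set you call $C$ is exactly the paper's list of ``power requirements,'' and both arguments combine Claim~\ref{claim:tight} with a sorted binary search using linear-time connectivity tests to get $O((n+m)\log m)$. The one thing you add is an explicit statement of the monotonicity of the disconnection predicate in $p$, which the paper leaves implicit but which is indeed what makes the binary search valid.
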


\begin{proof}
Define the \emph{power requirement} of edge $e= (u,v)$ to be $\frac{1}{2}w_{u,v}$ if $u,v \in V$ or $w_{u,v} $ if $u \in V$ and $v \in \{s,t\}$.
By Claim \ref{claim:tight}, 
$p^*$ is equal to the power requirement for some edge $e \in E$, and we can use binary search on the power requirements to find the minimum $p$ such that 
$s$ and $t$ become disconnected if we remove the edges whose power requirement is at most $p$.
We sort all the power requirements in non-decreasing order
and perform binary search on this list.  
To test a value $p'$ from this list, we create a graph $G'$ with vertices of $G$ and edges of $G$ whose power requirement is greater than $p'$.   
 
If $s$ and $t$ are disconnected in $G^\prime$ we recursively perform a binary search on the power requirements between $0$ and $p^\prime$, otherwise we recurse on all higher power requirements. The running time for this approach is determined by the binary search, which takes time $O(\log m)$, and by checking connectivity of $s$ and $t$ in $G^\prime$, which takes time $O(n+m)$.  Therefore, the total running time is $O((n+m)\log m)$.
Here $n$ is the number of vertices and $m$ is the number of edges.
\end{proof}

\section{Approximation Scheme for \MSPEC}
\label{sec:FPTAS}

The idea of our approximation algorithm is to convert the \MSPEC problem to a minimum vertex cut problem.  Observe that if we can only assign power 0 or 1 to every vertex in $V$, then our problem is Minimum Vertex Cut---remove a minimum number of vertices to disconnect $s$ and $t$.  We will discretize our problem by replacing each vertex $v \in V$ by multiple copies of $v$ such that removing one copy corresponds to assigning a small fraction of the maximum power to $v$. 
A similar approach was used, in a geometric setting, by Agarwal et al.~\cite{agarwal2014union}.
We want to ensure that the discretization introduces an error of at most $\frac{\varepsilon}{n}$ for each vertex with respect to the optimum solution.

In order to carry out this plan, we need an upper bound on the maximum power we might assign to any vertex, and, for the error analysis, we need a lower bound on the optimum solution.  We obtain these bounds from the minimum bottleneck shared-power edge cut.  
Let $n = |V|$, let $p^*$ be the minimum power for the bottleneck shared-power edge cut problem, and let \OPT be the minimum power sum for \MSPEC.

\begin{lemma} $p^* \le \OPT \le n p^*$.  
\label{lemma:n-approx}
\end{lemma}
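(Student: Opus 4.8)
The plan is to prove the two inequalities separately, relying on a single structural fact: raising the power assigned to every vertex can only enlarge the set of removed edges, hence can only help to disconnect $s$ and $t$. Concretely, the removed edge set $\{(u,v)\in E : p_u+p_v \ge w_{u,v}\}$ is monotone in the power vector, so if some assignment already separates $s$ from $t$, then any pointwise-larger assignment does too. Both bounds follow by moving between a general power assignment and a uniform one.

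For the upper bound $\OPT \le n p^*$, I would simply exhibit a feasible \MSPEC solution of cost $n p^*$. Take the optimal bottleneck assignment, which puts the common power $p^*$ on every vertex of $V$ (and $0$ on $s,t$). By definition of the bottleneck problem this disconnects $s$ and $t$: every edge $(u,v)$ with $u,v\in V$ and $2p^*\ge w_{u,v}$ is removed, and every edge $(u,v)$ with $u\in V$, $v\in\{s,t\}$ and $p^*\ge w_{u,v}$ is removed, which is exactly the \MSPEC removal condition when $p_u=p_v=p^*$ for interior vertices and $p_s=p_t=0$. Thus this uniform assignment is feasible for \MSPEC, and its power sum is $\sum_{v\in V} p^* = n p^*$. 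Since \OPT is the minimum over all feasible assignments, $\OPT \le n p^*$.

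For the lower bound $p^* \le \OPT$, I would start from an optimal \MSPEC assignment $(p_v)_{v\in V}$ with $\sum_{v} p_v = \OPT$, and set $p_{\max} = \max_{v\in V} p_v$. Replacing every interior power by $p_{\max}$ yields a pointwise-larger assignment, which by monotonicity still disconnects $s$ and $t$; and because all interior powers now equal $p_{\max}$, this is precisely a feasible bottleneck solution with bottleneck value $p_{\max}$ (for an interior--interior edge, $2p_{\max}\ge p_u+p_v\ge w_{u,v}$, and for an $s$/$t$-incident edge, $p_{\max}\ge p_u\ge w_{u,v}$). Hence $p^* \le p_{\max}$. Finally, since all powers are non-negative, $p_{\max} \le \sum_{v\in V} p_v = \OPT$, giving $p^* \le p_{\max} \le \OPT$.

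Combining the two parts yields $p^* \le \OPT \le n p^*$. There is no real obstacle here; the only point requiring a little care is checking that the uniform ``lift'' to $p_{\max}$ genuinely produces a \emph{bottleneck}-feasible cut, in particular that the asymmetric threshold for edges incident to $s$ and $t$ (where the requirement is $p_{\max}\ge w_{u,v}$ rather than $2p_{\max}\ge w_{u,v}$) is met; this follows immediately from $p_{\max}\ge p_u$ together with the original feasibility $p_u\ge w_{u,v}$.
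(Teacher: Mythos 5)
Your proof is correct and follows essentially the same approach as the paper: the upper bound by observing that the uniform bottleneck assignment $p^*$ is feasible for \MSPEC, and the lower bound by lifting an optimal \MSPEC solution to the uniform value $p_{\max}$, which is bottleneck-feasible and satisfies $p_{\max} \le \OPT$. You simply spell out the feasibility checks (the matching of removal conditions and the monotonicity argument) that the paper leaves implicit.
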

\begin{proof}
Assigning power $p^*$ to every vertex in $V$ provides a feasible solution to \MSPEC, and therefore $\OPT \le np^*$.

For the other inequality, let $p_{\max}$ be the maximum power assigned to any vertex of $V$ in an optimum solution to \MSPEC. Then $p_{\max} \le \OPT$.  Assigning $p_{\max}$ to every vertex in $V$ provides a solution to the minimum bottleneck shared-power edge cut.  Therefore $p^* \le p_{\max} \le \OPT$.  
\end{proof}

From this lemma, we know that the maximum power we might assign to a vertex is $np^*$.  
The lemma also implies that if we introduce an error of at most $\alpha = \frac{\varepsilon}{n} p^*$ for each vertex, then the total error over all vertices will be at most $\varepsilon p^* \le \varepsilon \OPT$.  Our plan is to construct a new graph $G' = (V',E')$ in which we replace each vertex of $V$ by $c$ copies, where each copy represents power $\alpha$.  Since the total power that we might assign to the vertex is $np^*$, the number of copies of the vertex that we need is $c = \lceil \frac{np^*}{\alpha} \rceil = \lceil \frac{n^2}{\varepsilon} \rceil$.  
We will replace each vertex of $V$ by a sequence of $c$ vertices, $v(0), v(1), \ldots, v(c-1)$.  

Removing the first $k$ vertices of the sequence will correspond to assigning power $k \alpha$ to $v$.  In $G'$ we will assign edges to the copies $v(i)$ to reflect this.  In particular, for $u,v \in V$ we put an edge $(u(i), v(j))$ in $G'$ if $i \alpha + j \alpha < w_{u,v}$. 
For $u \in V, x \in \{s,t\}$ we put an edge $(u(i),x)$ in $G'$ if $i\alpha < w_{u,x}$. A related idea of discretizing the choices using vertices combined with minimum cuts was developed by Hochbaum et al. \cite{hochbaum1993tight,HochbaumN94} for integer linear programms with at most two variables per inequality. They rely on an algorithm by Picard \cite{picard1976maximal} for finding the minimum-cost closure of a directed graph.

The precise construction of $G'$ is given in Algorithm 1. 

\begin{algorithm}
	\caption{Construction of $G'$}
	\label{algorithm:construction}
	\begin{algorithmic}
		
		\State $\forall v \in V$, make $c=\lceil \frac{n^2}{\varepsilon} \rceil$ copies of $v$ numbered $v(0),v(1),\dots,v(c-1)$ 
	    \State $V^\prime =  \{v(i) \mid \forall v \in V, 0 \leq i < c \} \cup \{s,t\}$
	    \smallskip
		\State $\alpha=\frac{\varepsilon p^*}{n}$
	    \smallskip
		\State $ E^\prime = \{ (u(i),v(j)) \mid i\alpha + j\alpha < w_{u,v} \} \cup \{ (u(i), x) \mid x \in \{s,t\}, i\alpha < w_{u,x} \}$	
	\end{algorithmic}
\end{algorithm}

Our approximation algorithm now proceeds as follows.  We find a minimum $s$-$t$ vertex cut in $G'$, denoted $K^*$,  and use it to define power values on the vertices of $G$ as given in Algorithm \ref{algorithm:approximation algorithm mspec}.

\begin{algorithm}
	\caption{Approximation Algorithm for \MSPEC}
	\label{algorithm:approximation algorithm mspec}
	\begin{algorithmic}
		\State Construct $G^\prime$ as in Algorithm~\ref{algorithm:construction}
		\State $K^* \gets $ minimum $s$-$t$ vertex cut for $G^\prime$
		\State $k^*_v = | \{ v(j) \in K^* \} | $ for $v \in V$
	    \State $p_{v} =  k^*_v \cdot \alpha$ for $v \in V$
		\State return $p=(p_{v})_{v \in V}$ as the solution for \MSPEC on $G$
	\end{algorithmic}
\end{algorithm}

To prove that this algorithm is an FPTAS we will first prove that the solution returned by the algorithm is indeed a feasible solution for \MSPEC.
Then, to bound the approximation factor, we will derive an upper bound on the solution returned by the algorithm in terms of the optimum solution to \MSPEC.   
We first prove some properties of $G'$ and of $K^*$.

\begin{claim}
    If $v \in V$ and $i_1 < i_2$, then in $G'$ the neighbourhoods of $v(i_1)$ and $v(i_2)$ are related by $N(v(i_1)) \supseteq N(v(i_2))$.
    \label{claim:neighbourhoods}
\end{claim}
\begin{proof}
We will show that $(v(i_2),x) \in E'$ implies $(v(i_1),x) \in E'$.

\noindent
Case 1.  $x = u(j)$ for $u \in V$. Since the edge $(v(i_2), u(j))$ is in $E'$ we have $i_2 \alpha + j \alpha < w_{v,u}$ so $i_1 \alpha + j \alpha <  w_{v,u}$ which implies  $(v(i_1),u(j)) \in E'$.

\noindent
Case 2.  $x \in \{s,t\}$.  Since the edge $(v(i_2), x)$ is in $E'$ we have $i_2 \alpha  < w_{v,x}$ so $i_1 \alpha  <  w_{v,x}$ which implies  $(v(i_1),x) \in E'$.
\end{proof}

\begin{claim} The copies of $v$ in $K^*$ are $v(0), v(1), \ldots, v(k^*_v -1)$.
	\label{claim:nestedstructure}
\end{claim}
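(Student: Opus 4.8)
The plan is to prove this by an exchange (uncrossing) argument driven entirely by the neighbourhood containment of Claim~\ref{claim:neighbourhoods}. The intuition is that among the copies $v(0),v(1),\dots,v(c-1)$, the lower-indexed copies \emph{dominate} the higher-indexed ones: since $i_1<i_2$ gives $N(v(i_1))\supseteq N(v(i_2))$, a low-index copy can do everything a high-index copy does as a cut vertex. Hence a minimum cut should have no incentive to use $v(i_2)$ while leaving $v(i_1)$ out. To make this precise I would first note that Algorithm~\ref{algorithm:approximation algorithm mspec} only needs \emph{some} minimum cut, so I would fix attention on a minimum $s$-$t$ vertex cut $K^*$ that additionally minimizes the potential $\Phi(K^*)=\sum_{v(i)\in K^*} i$ (the total index of its vertices), breaking ties among minimum cuts. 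I then claim such a $K^*$ has the prefix property for every $v$, which gives the statement.

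Suppose not. Then for some $v$ there is a \emph{gap}: indices $i_1<i_2$ with $v(i_1)\notin K^*$ but $v(i_2)\in K^*$. Define $K'=\bigl(K^*\setminus\{v(i_2)\}\bigr)\cup\{v(i_1)\}$. Since we delete an element of $K^*$ and insert one not in $K^*$, we have $|K'|=|K^*|$, and $\Phi(K')=\Phi(K^*)-(i_2-i_1)<\Phi(K^*)$. The crux is to show $K'$ is still an $s$-$t$ vertex cut; that would contradict the minimality of $\Phi(K^*)$ and finish the proof. So suppose $K'$ is not a cut and take an $s$-$t$ path $P$ in $G'\setminus K'$. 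Because $P$ avoids $K'\supseteq K^*\setminus\{v(i_2)\}$ and $K^*$ \emph{is} a cut, $P$ must use the one remaining vertex of $K^*$, namely $v(i_2)$. As $v(i_2)\notin\{s,t\}$, it is an internal vertex of $P$ with two path-neighbours $a,b$, and the edges $(a,v(i_2)),(v(i_2),b)$ lie in $G'$. By Claim~\ref{claim:neighbourhoods}, $N(v(i_2))\subseteq N(v(i_1))$, so $(a,v(i_1)),(v(i_1),b)\in E'$; moreover $v(i_1)\in K'$ implies $v(i_1)\notin P$. Replacing the internal occurrence of $v(i_2)$ by $v(i_1)$ therefore yields a valid simple $s$-$t$ path $P'$ in $G'$. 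But every vertex of $P'$ other than $v(i_1)$ already appears on $P$ and hence avoids $K^*\setminus\{v(i_2)\}$, while $v(i_1)\notin K^*$; thus $P'$ avoids all of $K^*$, contradicting that $K^*$ is a cut.

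The step I expect to be the main obstacle is the rerouting in the previous paragraph: I must be careful that $P'$ is genuinely a simple path (which relies on $v(i_1)\notin P$, guaranteed because $v(i_1)\in K'$) and that, after the single-vertex substitution, the resulting walk avoids every vertex of $K^*$ rather than merely of $K'$. Once that bookkeeping is done, the contradiction is immediate and the exchange argument closes: the $\Phi$-minimal minimum cut can contain no gap, so for each $v$ its copies in $K^*$ are exactly the prefix $v(0),v(1),\dots,v(k^*_v-1)$, as claimed.
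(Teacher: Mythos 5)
Your core idea---neighbourhood domination $N(v(i_1)) \supseteq N(v(i_2))$ plus rerouting a path through the dominating copy---is exactly the engine of the paper's proof, and your exchange step is carried out correctly as far as it goes. But your formulation proves strictly less than the claim. The claim (and its later use in Lemma~\ref{lemma:feasible}) is about the cut $K^*$ that Algorithm~\ref{algorithm:approximation algorithm mspec} actually computes, which is an \emph{arbitrary} minimum $s$-$t$ vertex cut returned by a max-flow computation. Your argument establishes the prefix property only for a minimum cut that additionally minimizes the potential $\Phi$, and your exchange step inherently cannot do more: applied to an arbitrary minimum cut with a gap, it merely produces \emph{another} minimum cut of the same size, yielding no contradiction unless $\Phi$-minimality was assumed. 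So as written there is a gap: you would have to show how the algorithm can be made to output a $\Phi$-minimal (or at least gap-free) cut---for example by perturbing vertex capacities or by post-processing with repeated applications of your swap lemma---which modifies Algorithm~\ref{algorithm:approximation algorithm mspec} and its running-time analysis, and which you only gesture at with ``fix attention on'' such a cut.

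The paper gets the stronger statement, valid for \emph{every} minimum cut, with a small twist on your rerouting. Suppose $i_1 < i_2$, $v(i_2) \in K^*$, but $v(i_1) \notin K^*$. Instead of swapping $v(i_1)$ in, simply delete $v(i_2)$: any $s$-$t$ path avoiding $K^* \setminus \{v(i_2)\}$ must pass through $v(i_2)$ (since $K^*$ is a cut), and replacing $v(i_2)$ by $v(i_1)$ on this path gives an $s$-$t$ walk all of whose vertices avoid $K^*$, because $v(i_1) \notin K^*$ by assumption and every other vertex of the walk lies on the original path and differs from $v(i_2)$. Extracting a simple path from this walk---this is the step that replaces your ``$v(i_1) \in K'$ hence $v(i_1) \notin P$'' bookkeeping, since here $v(i_1)$ may already lie on the path---contradicts $K^*$ being a cut. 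Hence $K^* \setminus \{v(i_2)\}$ is an $s$-$t$ cut of strictly smaller size, contradicting minimality of $|K^*|$. This shows every minimum vertex cut of $G'$ is prefix-structured, which is what the claim asserts and what the algorithm needs; the potential function and tie-breaking in your proposal then become unnecessary.
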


\begin{proof}
We will prove that the copies of $v$ in $K^*$ form a prefix of $v(0), v(1), \ldots v(c-1)$.
Then the result follows since there are $k_v^*$ copies of $v$ in $K^*$.

Consider $i_1 < i_2$.  
By Claim~\ref{claim:neighbourhoods}, $N(v(i_1)) \supseteq N(v(i_2))$. 
     Now observe that if a graph contains vertices $u$ and $v$ with $N(u) \supseteq N(v)$ and $v$ is in a minimum vertex cut, then $u$ must be as well, since $u$ is a duplicate of $v$ with possibly some more edges. Therefore if $v(i_2)$ is in $K^*$ then so is $v(i_1)$.
\end{proof}

\begin{lemma}
	A solution of Algorithm \ref{algorithm:approximation algorithm mspec} is a feasible solution for {\it MSPEC} and the sum of the assigned powers is $\alpha |K^*|$.
	\label{lemma:feasible}
\end{lemma}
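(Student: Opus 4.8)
The plan is to verify the two assertions separately, beginning with the easy one. Since $K^*$ is an $s$-$t$ vertex cut it is a subset of $V' \setminus \{s,t\}$, so it contains only copies $v(i)$ of vertices $v \in V$. Hence $|K^*| = \sum_{v \in V} k^*_v$, and the assigned power sum is $\sum_{v \in V} p_v = \sum_{v \in V} k^*_v \alpha = \alpha |K^*|$, which settles the second claim immediately.

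For feasibility I would argue by contrapositive. Assume the power assignment $p$ does \emph{not} disconnect $s$ and $t$ in $G$, so there is a path $s = u_0, u_1, \ldots, u_k = t$ all of whose edges \emph{survive}, i.e.\ $p_{u_i} + p_{u_{i+1}} < w_{u_i,u_{i+1}}$ for internal edges and $p_{u_i} < w_{u_i,x}$ for an edge to $x \in \{s,t\}$. I would lift this path into $G'$ by sending each internal vertex $u_i$ to its \emph{representative copy} $u_i(k^*_{u_i})$ while fixing $s$ and $t$, then show the resulting sequence is an $s$-$t$ path in $G'$ that avoids $K^*$, contradicting that $K^*$ is a cut in $G'$. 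That each representative copy avoids $K^*$ is exactly Claim~\ref{claim:nestedstructure}: the copies of $u_i$ in $K^*$ are $u_i(0), \ldots, u_i(k^*_{u_i}-1)$, so $u_i(k^*_{u_i}) \notin K^*$.

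The crux is checking adjacency of consecutive representative copies in $G'$, and this is where the correspondence between the two graphs is pinned down. For a surviving internal edge $(u_i, u_{i+1})$ we have $k^*_{u_i}\alpha + k^*_{u_{i+1}}\alpha = p_{u_i} + p_{u_{i+1}} < w_{u_i,u_{i+1}}$, which is \emph{precisely} the defining condition for $(u_i(k^*_{u_i}), u_{i+1}(k^*_{u_{i+1}})) \in E'$; the case of an edge to $s$ or $t$ is identical using $k^*_{u_i}\alpha < w_{u_i,x}$. Thus every edge of the lifted path is present in $G'$, completing the contradiction and establishing that the solution is feasible for \MSPEC.

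The main obstacle, I expect, is getting the copy-selection exactly right so that the strict-inequality edge rule of $G'$ lines up with the survival threshold in $G$. The representative $u(k^*_u)$ is the lowest-indexed surviving copy, so among all edges to surviving copies of a neighbour it carries the smallest index-sum and is therefore the tightest to satisfy; its existence condition $k^*_u\alpha + k^*_v\alpha < w_{u,v}$ coincides verbatim with ``edge $(u,v)$ survives in $G$,'' the complement of the removal rule $p_u + p_v \ge w_{u,v}$. This exact alignment, together with the monotone nesting of neighbourhoods from Claim~\ref{claim:neighbourhoods} that underlies Claim~\ref{claim:nestedstructure}, is what makes the lift simultaneously well-defined (landing outside $K^*$) and edge-preserving.
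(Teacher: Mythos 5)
Your proof is correct and follows essentially the same argument as the paper: assuming a surviving $s$-$t$ path in $G$, you lift it to the path of representative copies $u_i(k^*_{u_i})$ in $G'$, use Claim~\ref{claim:nestedstructure} to show these copies avoid $K^*$ and the strict edge rule of $G'$ to show the lifted edges exist, contradicting that $K^*$ is a cut; the power-sum computation is also the same. No substantive differences from the paper's proof.
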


\begin{proof}
    Let $G_p$ be the result of removing from $G$ all the edges $(u,v)$ with $p_u + p_v \ge w_{u,v}$ where $p$ is as defined by  Algorithm \ref{algorithm:approximation algorithm mspec}. (Recall that we always set $p_s = p_t = 0$.)
    We must show that $s$ and $t$ are disconnected in $G_p$.  Suppose not.
	Then there is an $s$-$t$ path $P$ in $G_p$, say $s, x_1, x_2, \ldots, x_l, t$.
	We will show that the path $P' = s, x_1(k_{x_1}^*), x_2(k_{x_2}^*), \ldots, x_l(k_{x_l}^*), t$ exists in $G' - K^*$, a contradiction to $K^*$ being an $s$-$t$ vertex cut.
	
	First, note that all the vertices of $P'$ lie in $G' - K^*$ by Claim~\ref{claim:nestedstructure}.
	Consider an edge $(u,v)$ of the path $P$ with $u,v \in V$.  
	Since this edge is still present in $G_p$, we know that $p_u + p_v < w_{u,v}$.  
	Thus by the definition of the $p$ values,  $k^*_u \alpha + k^*_v \alpha < w_{u,v}$. 
	By the construction of $G^\prime$, the edge $(u(k_u^*), v(k_v^*))$ is in $E^\prime$.
	
	It remains to consider the edges of $P$ incident to $s$ and $t$.  Since the edge $(s,x_1)$ is in $G_p$, we know that
	$p_{x_1} < w_{s,x_1}$.  Thus by the definition of the $p$ values, $k^*_{x_1} \alpha < w(s, x_1)$.  By the construction of $G'$, the edge $(s, x_1(k_{x_1}^*))$ is in $E'$.
	The argument for edge $(x_l, t)$ is similar.  
	
	Thus the path $P'$ exists in $G' - K^*$ which gives the desired contradiction. 
	
	In our solution we assign a power of $p_v$ to vertex $v$, so the total power assigned is:
	\begin{ceqn}
	    \begin{equation}
	        \sum_{v \in V} p_v = \sum_{v \in V} k^*_v \cdot \alpha = \alpha \sum_{v \in V} k^*_v = \alpha |K^*|
	    \qedhere\end{equation}
	\end{ceqn}

\end{proof}

Next we prove an upper bound on the size of the set $K^*$ (a minimum $s$-$t$ vertex cut in $G'$) in terms of an optimum solution to \MSPEC. 
Let $p^*_v, v \in V$ be an optimum power assignment for \MSPEC. Let $p^*_s = p^*_t = 0$.

\begin{lemma}
    $G'$ has a vertex cut $K$ of size $\sum \lceil \frac{p_v^*}{\alpha} \rceil$.  Thus $|K^*| \le \sum \lceil \frac{p_v^*}{\alpha} \rceil$.
\label{lemma:VC-bound}
\end{lemma}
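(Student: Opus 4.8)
The plan is to build the cut $K$ explicitly from the optimum power assignment by rounding up, and then to verify that it is an $s$-$t$ vertex cut by a contradiction argument that essentially reverses Lemma~\ref{lemma:feasible}. For each $v \in V$ set $k_v = \lceil \frac{p^*_v}{\alpha}\rceil$ and place the prefix of copies $v(0), v(1), \ldots, v(k_v-1)$ into $K$. By construction $|K| = \sum_{v \in V} \lceil \frac{p^*_v}{\alpha}\rceil$, so once $K$ is shown to be a cut, the final inequality $|K^*| \le |K|$ is immediate from $K^*$ being a minimum $s$-$t$ vertex cut in $G'$.

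To show $K$ is a cut, I would suppose otherwise and take an $s$-$t$ path $P' = s, x_1(i_1), \ldots, x_l(i_l), t$ in $G' - K$. Since $x_j(i_j) \notin K$ and $K$ contains exactly the prefix $x_j(0),\ldots,x_j(k_{x_j}-1)$, the index satisfies $i_j \ge k_{x_j} = \lceil \frac{p^*_{x_j}}{\alpha}\rceil$, and hence the key inequality $i_j \alpha \ge p^*_{x_j}$. The goal is then to project $P'$ down to the path $P = s, x_1, \ldots, x_l, t$ in $G$ and argue that the optimum assignment $p^*$ removes none of its edges, contradicting the feasibility of $p^*$.

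For an interior edge $(x_j(i_j), x_{j+1}(i_{j+1})) \in E'$, the construction of $G'$ gives $i_j\alpha + i_{j+1}\alpha < w_{x_j, x_{j+1}}$; combining this with the key inequality applied at both endpoints yields
\[
p^*_{x_j} + p^*_{x_{j+1}} \le i_j\alpha + i_{j+1}\alpha < w_{x_j, x_{j+1}},
\]
so the edge $(x_j, x_{j+1})$ survives in $G$ under $p^*$. The edges incident to $s$ and $t$ are handled identically, using the $s,t$-edge rule of the construction together with $p^*_s = p^*_t = 0$. Thus $P$ is an $s$-$t$ path in $G$ none of whose edges is removed by $p^*$, contradicting that $p^*$ is a feasible \MSPEC solution, and so $K$ must be a cut.

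I do not expect a serious obstacle: both the construction and the contradiction are short, and the whole argument mirrors Lemma~\ref{lemma:feasible} run in reverse. The one point requiring genuine care — and the only place the proof could silently fail — is the direction of the rounding. We must take the \emph{ceiling} of $p^*_v/\alpha$ so that $i_j\alpha \ge p^*_{x_j}$ holds; it is precisely this inequality that lets an edge present in $G' - K$ certify an uncut edge of $G$. A floor in place of the ceiling, or an off-by-one in the prefix length, would reverse the inequality and break the reduction, so I would state and use that step explicitly.
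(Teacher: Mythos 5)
Your proposal is correct and takes essentially the same approach as the paper: the identical cut $K$ built from the prefix of $k_v = \lceil p^*_v/\alpha \rceil$ copies of each vertex, the same ceiling-based rounding inequality, and the same appeal to minimality of $K^*$ for the final bound. The only difference is presentational: the paper verifies that $K$ is a cut directly, by showing that every copy of every edge removed by $p^*$ is absent from $G' - K$, whereas you prove the contrapositive, projecting a hypothetical surviving $s$-$t$ path in $G' - K$ down to an $s$-$t$ path in $G$ that survives $p^*$ (mirroring Lemma~\ref{lemma:feasible} in reverse); the two verifications are logically equivalent.
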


\begin{proof}
Let $M^*$ be the set of edges of $G$ that are removed by the optimum power assignment $p^*_v, v \in V$, i.e., $M^* = \{(u,v) : p^*_u + p^*_v \ge w_{u,v} \}$. 
$M^*$ is an $s$-$t$ cut in $G$.

Define $k_v = \lceil \frac{p_v^*}{\alpha} \rceil$ for $v \in V$.
Define $K$, a set of vertices of $G'$, to consist of the first $k_v$ copies of each vertex $v \in V$, 
i.e.,~$K = \{v(i): v \in V, 0 \le i < k_v \}$.  Observe that $|K| = \sum \lceil \frac{p_v^*}{\alpha} \rceil$.  
To prove the lemma, we just need to show that $K$ is an $s$-$t$ vertex cut in $G'$.

It suffices to show that if $(u,v) \in M^*$ then there is no copy of edge $(u,v)$ in $G' - K$.
Consider an edge $(u,v) \in M^*$.  Then $p^*_u + p^*_v \ge w_{u,v}$, so $\lceil \frac{p_u^*}{\alpha} \rceil \alpha + \lceil \frac{p_v^*}{\alpha} \rceil \alpha \ge w_{u,v}$, i.e., $k_u \alpha + k_v \alpha \ge w_{u,v}$.   

Now observe that if $i \ge k_u$ and $j \ge k_v$ then $i \alpha + j \alpha \ge k_u \alpha + k_v \alpha \ge w_{u,v}$, so $(u(i), v(j))$ is not an edge of $G'$ (by definition of $G'$).  On the other hand, if $i < k_u$ then $u(i) \in K$, so $u(i)$ is not a vertex of $G' - K$, and similarly, if $j < k_v$ then $v(j) \in K$, so $v(j)$ is not a vertex of $G' - K$. 
Thus no copy of the edge $(u,v)$ exists in $G' - K$.  This proves that $K$ is an $s$-$t$ vertex cut in $G'$ and completes the proof of the lemma.
\end{proof}

\begin{theorem}
	Algorithm~\ref{algorithm:approximation algorithm mspec} is a fully polynomial time approximation scheme (FPTAS) for the Minimum Shared-Power Edge Cut problem (MSPEC).
	Furthermore, the running time of Algorithm~\ref{algorithm:approximation algorithm mspec} is $O(n^{5.5}m\varepsilon^{-2.5})$.
\end{theorem}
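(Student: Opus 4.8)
The plan is to prove the two assertions separately, since the accuracy bound follows almost immediately from the structural lemmas already in hand, while the time bound requires some care in the flow computation. First I would establish the $(1+\varepsilon)$ guarantee by a short inequality chain. By Lemma~\ref{lemma:feasible} the algorithm returns a feasible solution of power sum $\alpha|K^*|$, and by Lemma~\ref{lemma:VC-bound} we have $|K^*|\le\sum_{v\in V}\lceil p_v^*/\alpha\rceil$. Using $\lceil p_v^*/\alpha\rceil < p_v^*/\alpha + 1$ termwise gives $\alpha|K^*| < \sum_{v\in V} p_v^* + n\alpha = \OPT + n\alpha$. Substituting $\alpha = \varepsilon p^*/n$ makes $n\alpha = \varepsilon p^*$, and since $p^* \le \OPT$ by Lemma~\ref{lemma:n-approx}, the returned power sum is strictly below $(1+\varepsilon)\OPT$. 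This is the whole correctness argument.

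For the running time, I would first bound the size of $G'$. Each of the $n$ vertices of $V$ is replaced by $c=\lceil n^2/\varepsilon\rceil$ copies, so $G'$ has $N = nc + 2 = O(n^3/\varepsilon)$ vertices. Each edge $(u,v)\in E$ with $u,v\in V$ contributes at most $c^2$ copies to $E'$ (one per pair $(i,j)$ with $i\alpha+j\alpha < w_{u,v}$), while the edges incident to $s$ and $t$ contribute only $O(mc)$; hence $M = |E'| = O(mc^2) = O(mn^4/\varepsilon^2)$. Building $G'$ (Algorithm~\ref{algorithm:construction}) costs $O(M)$, and computing $p^*$ to fix $\alpha$ costs $O((n+m)\log m)$ by Theorem~\ref{theorem:bottleneck}, both dominated by what follows.

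The dominant step is computing the minimum $s$-$t$ vertex cut $K^*$ of $G'$. I would reduce this to maximum flow in the standard way: split every copy $v(i)$ into an in-vertex and an out-vertex joined by a unit-capacity edge, and assign infinite capacity to the edges of $G'$. The resulting network has unit vertex capacities, so every in-vertex has a single outgoing edge and every out-vertex a single incoming edge; this is exactly the ``unit network'' structure for which an Even–Tarjan-style blocking-flow algorithm computes the min vertex cut in $O(M\sqrt{N})$ time. Plugging in $N=O(n^3/\varepsilon)$ and $M=O(mn^4/\varepsilon^2)$ gives $O\!\left(mn^4\varepsilon^{-2}\cdot n^{1.5}\varepsilon^{-0.5}\right)=O(n^{5.5}m\varepsilon^{-2.5})$, matching the claim, and since this is polynomial in $n$, $m$, and $1/\varepsilon$ while the error is $(1+\varepsilon)$, the algorithm is an FPTAS.

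I expect the main obstacle to be the running-time accounting rather than the accuracy, which is essentially free from the lemmas. Specifically, obtaining the exponent $5.5$ and the $\varepsilon^{-2.5}$ relies entirely on the $\sqrt{N}$ savings of the specialized vertex-capacity flow algorithm: a naive bound of $|K^*| = O(n^2/\varepsilon)$ augmenting paths at $O(M)$ each would yield only $O(mn^6\varepsilon^{-3})$. The care therefore has to go into verifying that the vertex-split version of $G'$ genuinely satisfies the unit-network hypotheses under which the $O(M\sqrt{N})$ blocking-flow bound holds.
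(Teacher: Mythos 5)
Your proposal is correct and follows essentially the same route as the paper: the identical inequality chain $\alpha|K^*| \le \sum_v \lceil p_v^*/\alpha\rceil\alpha < \OPT + n\alpha = \OPT + \varepsilon p^* \le (1+\varepsilon)\OPT$ for accuracy, and the same size bounds $|V'| = O(n^3/\varepsilon)$, $|E'| = O(mn^4/\varepsilon^2)$ combined with an $O(|E'|\sqrt{|V'|})$ minimum vertex cut computation for the running time. The only difference is presentational: the paper delegates the vertex-cut-via-flow step to citations (a modified Dinitz/Even--Tarjan bound), whereas you spell out the vertex-splitting unit-network reduction explicitly, which is the standard argument behind those citations.
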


\begin{proof}
	 
	 By Lemma~\ref{lemma:feasible} we know that Algorithm~\ref{algorithm:approximation algorithm mspec} gives a solution to \MSPEC of cost $\alpha |K^*|$, 
	 and by Lemma~\ref{lemma:VC-bound} we have $|K^*| \le \sum \lceil \frac{p_v^*}{\alpha} \rceil$.

\begin{ceqn}
	 \begin{equation*}
    	 \begin{aligned}
    	 \text{Approximate solution} ~& =~ \alpha |K^*|  \le \sum_{v \in V} \Big \lceil \frac{p_v^*}{\alpha} \Big \rceil \cdot \alpha  
    	 ~<~ \sum_{v \in V} \Big(\frac{p_v^*}{\alpha}+1 \Big) \cdot \alpha \\ 
    	 &=~ \sum_{v \in V}  \Big (p_v^* + \alpha \Big) 
    	 ~=~ \sum_{v \in V} p_v^* + n\alpha 
    	 ~=~ OPT + n\alpha .
    	 \end{aligned}
	 \end{equation*}	
\end{ceqn}
	 
Thus the approximation ratio is at most $ \frac{OPT+n\alpha}{OPT} = \frac{OPT + \frac{n \varepsilon p^*}{n}}{OPT}  = 1 + \frac{\varepsilon p^*}{OPT} \leq (1 + \varepsilon)$, since $ OPT \geq p^*$. 
 The running time of Algorithm~\ref{algorithm:approximation algorithm mspec} depends on the running time of finding a minimum $s$-$t$ vertex cut and this can be done in time $O(n^{1/2}m) = O(n^{2.5})$ for a graph with $n$ vertices and $m$ edges using a modified version of Dinitz's flow algorithm \cite[Chapter 1]{Nagamochi:2008:AAG:1434866}~\cite[Corollary 9.7a]{schrijver2002combinatorial}. 
 If our original graph $G$ has $n$ vertices and $m$ edges, then the graph $G'$ has $|V'|=nc=O(\frac{n^3}{\varepsilon})$ vertices and at most $|E'| \le |E|c^2= O(\frac{mn^4}{\varepsilon^2})$ edges,
 and Algorithm~\ref{algorithm:approximation algorithm mspec} finds a vertex cut in graph $G'$ in $O(n^{5.5}m\varepsilon^{-2.5})$ time.
\end{proof}

\section{Variations of \MSPEC}
\label{sec:variations}

In this section we give polynomial time algorithms for three special cases of \MSPEC,  
when the edge weights are uniform, or integral, and when the power values at each vertex are restricted to a polynomially bounded domain.
In the final section we show that our FPTAS extends to a more general version of \MSPEC where there are costs on the vertices.

\subsection{Uniform/Integral edge weights}
\label{subsec:uniform}

We use the alternative formulation of \MSPEC given at the start of Section~\ref{sec:formulations}. 
For any instance of \MSPEC there is a vertex partition $S,T$ with $s \in S$ and $t \in T$, such that the minimum power sum is 
equal to the minimum ``$w$-vertex cover'' in the bipartite graph $B$ of edges between $S$ and $T$, 
and this is equal to the weight of a maximum matching in $B$.  
By K\H{o}nig's Theorem and its generalization to weights (Egerv\'ary's Theorem, see~\cite[Theorem 17.1]{schrijver2002combinatorial})
we know that if the edge weights are integral then there is an optimal solution where the power values are also integral, and if the edge weights are all 1 then it suffices to consider power values that are $\{0,1\}$-valued.

For \MSPEC with uniform edge weights, we can scale so that all edge weights are 1.  Then, by the above, the power values will be 0 or 1 and the problem reduces to Minimum Vertex Cut, which can be solved in time $O(n^{1/2}m)$ time as discussed in the previous section. 

For \MSPEC with integer edge weights, we can assume that the power values are integral.
Furthermore, if the edge weights are bounded by $W$, then so are the power values.
We can then use the same approach as Algorithm \ref{algorithm:construction}, but make $W+1$ copies of each vertex, and set $\alpha=1$. 
This solves \MSPEC exactly, with a running time of $O((nW)^{1/2}(mW^2))=O(n^{1/2}mW^{5/2})$.  Thus it provides a pseudo polynomial time algorithm for \MSPEC with integral edge weights.

\subsection{Polynomial Domain}
\label{subsec:polynomial domain}

In activation network design problems, one of the most common assumptions is that 
of a `polynomial domain' for the power values~\cite{Panigrahi:2011:SND:2133036.2133114,nutov2013survivable}. 
This means that the power values for vertex $v$ come from a set $D^v$ whose size is bounded by a polynomial in $n = |V|$.
The polynomial domain assumption is realistic for wireless networks when there are a small number of possible powers that can be assigned to a vertex. 
However, this does not apply to the minimum shrinkage problem 
because we are using shrinkage as a measure of barrier coverage rather than making any assumption about sensor powers.

Under the polynomial domain assumption, \MSPEC admits a polynomial time algorithm.  We sketch the algorithm.
The technique is similar to the one we used in the previous section.  
We assume $0 \in D^v$ for all $v$ (otherwise, we must pay $\min(D^v)$ directly, and can then subtract it from all later values).
Let the values in $D^v$ be $d^v_0, d^v_1, \ldots , d^v_{c_v}$ in increasing order.
We modify the construction of graph $G'$ from Algorithm~\ref{algorithm:construction}, to create $|D^v| = c_v +1$ copies of each vertex $v$, which we number $v(0), v(1), \ldots, v(c_v)$. 
We assign a cost of $d^v_{i+1} - d^v_i$ to $v(i)$ for $i=0, \ldots, c_v -1$, and $\infty$ to $v(c_v)$.

In graph $G'$ we add an edge between two vertices $u(i)$ and $v(j)$ if $d^u_i + d^v_j < w_{u,v}$.
We now run a max flow algorithm to find the minimum cost $s$-$t$ vertex cut in $G'$  
by applying the standard graph transformation (see~\cite{west}) to convert a vertex cut to an edge cut, and vertex weights to edge capacities.
If the minimum cost is $\infty$ then there is no feasible assignment of powers from the given domains.  Otherwise, 
for vertex $v$, if $i$ is the maximum index for which $v(i)$ is included in the cut, then we assign $p_v = d^v_{i+1}$.  Note that a feasible solution has $i < c_v$ so this is well-defined.

For correctness, first observe that the prefix property, as mentioned in Claim \ref{claim:nestedstructure}, still holds.
Next, we claim that if the assigned powers do not remove edge $(u,v)$ then there was a copy of that edge in $G'$ after removing the optimum vertex cut.
Suppose we assigned $p_u = d^u_{i+1}$ and $p_v = d^v_{j+1}$ but $p_u + p_v < w_{u,v}$. 
Then the edge $(u(i+1), v(j+1))$ still exists in $G'$.
In the other direction, a solution to \MSPEC with powers from the given domains yields a vertex cut in $G'$ of equal cost. 
 
The running time for the algorithm is determined by the running time of the  standard max-flow algorithms, which is $O(n^3 / \log n)$~\cite{cheriyan1996n} (or see~\cite[Section 10.8]{schrijver2002combinatorial}). The number of vertices in the new graph $G^\prime$ is $\sum_v |D^v|$, thus the running time is $O((\sum_v |D^v|)^{3} / \log(\sum_v |D^v|))$.

\subsection{\MSPEC with Vertex Costs}

In this section we outline an FPTAS for a more general version of \MSPEC with multiplicative vertex costs.  Formally, the problem is defined as:

\smallskip
{\bf Input:} Graph $G=(V \cup \{s,t\},E)$ with a non-negative weight $w_{u,v}$ on each edge $(u,v) \in E$ and a non-negative cost $c_v$ on every vertex $v \in V$ .

{\bf Problem:} Assign a non-negative power $p_v$ to each $v \in V$ and assign $p_s=p_t=0$ such that removing the edge set $\{(u,v) \in E : p_u + p_v \ge w_{u,v} \}$ 
disconnects 
$s$ and $t$ and $\sum_{v \in V} c_v p_v$ is minimized.

Using a similar bottleneck and discretization approach, we can get an FPTAS for 
\MSPEC with vertex costs.

The key idea is to consider the power-cost $c_v  p_v$ instead of the power at each vertex. 
This enables us to get a bottleneck solution analogous to Theorem \ref{theorem:bottleneck} by doing a binary search on the vertex power-costs. 
Similarly, for the construction phase of graph $G^\prime$ as described in Algorithm \ref{algorithm:construction}, 
we add an edge between $u(i)$ and $v(j)$ to $G^\prime$ if $\frac{i\alpha}{c_u} + \frac{j\alpha}{c_v} < w_{u,v}$. 
The remainder of the algorithm and the proofs carry over after similar modifications.

\section{Faster Approximation for \MSPEC}
\label{sec:speedup}

In this section, we will briefly mention how to speed up our FPTAS from the current running time of $O(n^{5.5}m\varepsilon^{-2.5})$ to $O(n^{3}m\varepsilon^{-2.5} + m^3 / \log m)$.
To accomplish this, we discretize the problem using
a faster $2$-approximation for \MSPEC instead of 
the $n$-approximation given by the minimum bottleneck shared power edge cut.
Let \OPT be the optimum solution value for \MSPEC.  

For the 2-approximation, we consider a discrete version of
\MSPEC where
a vertex can only be assigned a power equal to the weight of one of its incident edges.

With this restriction, the number of possible choices of power assignment at a given vertex is equal to the degree of the vertex in $G$.

Thus the polynomial domain assumption is satisfied and the algorithm of Subsection \ref{subsec:polynomial domain} solves this discrete version in time $O((\sum_v {\rm deg} (v))^{3} / \log (\sum_v {\rm deg} (v))) = O(m^{3} / \log m)$.

We now use the fact that the discrete \MSPEC provides a 2-approximation.  This was proved more generally by Panigrahi~\cite[Lemma 3.11]{Panigrahi:2011:SND:2133036.2133114}, but we include his short argument.

\begin{theorem}\cite{Panigrahi:2011:SND:2133036.2133114}
     Discrete \MSPEC is an 2-approximation 
     for \MSPEC.
     \label{theorem:2-approx}
\end{theorem}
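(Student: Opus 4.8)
The plan is to take an optimal continuous solution and round each vertex's power \emph{up} to the weight of an incident cut edge, using an orientation argument to control the resulting cost increase. Let $p^*_v$, $v \in V$, denote an optimum power assignment for \MSPEC with optimum value \OPT, and let $M^* = \{(u,v) \in E : p^*_u + p^*_v \ge w_{u,v}\}$ be the $s$-$t$ cut it induces.

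First I would orient each edge $(u,v) \in M^*$ toward its endpoint of larger power, breaking ties arbitrarily; say it is oriented toward $v$, so that $p^*_v \ge p^*_u$. Since $(u,v) \in M^*$ we have $p^*_u + p^*_v \ge w_{u,v}$, and combining this with $p^*_v \ge p^*_u$ gives $2p^*_v \ge w_{u,v}$, i.e.\ $w_{u,v} \le 2p^*_v$. This single inequality drives the whole argument.

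Next I would define a discrete power assignment $q$ by setting $q_v$ to be the maximum weight among the cut edges oriented toward $v$, and $q_v = 0$ if no cut edge is oriented toward $v$ (the value $0$, meaning we simply do not use the vertex, is always a permitted choice). By construction each $q_v$ is either $0$ or the weight of an edge incident to $v$, so $q$ is a legal assignment for Discrete \MSPEC. For feasibility I would verify that $q$ removes every edge of $M^*$: each $(u,v) \in M^*$ is oriented toward exactly one endpoint, say $v$, whence $q_v \ge w_{u,v}$ and therefore $q_u + q_v \ge w_{u,v}$. Thus $q$ removes the entire $s$-$t$ cut $M^*$ and is feasible for \MSPEC.

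Finally I would bound the cost. Whenever $q_v = w_{u,v} > 0$ for an edge oriented toward $v$, the inequality above gives $q_v = w_{u,v} \le 2p^*_v$, and this also holds trivially when $q_v = 0$. Summing over $v \in V$ yields $\sum_{v} q_v \le 2\sum_{v} p^*_v = 2\,\OPT$. Since the optimum of Discrete \MSPEC is at most the cost of the feasible assignment $q$, it is at most $2\,\OPT$, as claimed. I expect the only real subtlety to be confirming that rounding \emph{up} preserves feasibility of the cut while the orientation simultaneously keeps each rounded value within a factor $2$ of $p^*_v$; orienting toward the larger-power endpoint is precisely what reconciles these two requirements.
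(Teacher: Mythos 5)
Your proof is correct and follows essentially the same argument as the paper: the paper likewise selects (your ``orients toward'') the higher-power endpoint of each cut edge, raises its power to the maximum weight of the cut edges selecting it (at most doubling it, since $p_v \ge \frac{1}{2}w_{u,v}$), and zeroes out the unselected vertices. The only cosmetic difference is that you phrase the construction as building a fresh assignment $q$ rather than modifying the optimal one, which changes nothing substantive.
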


\begin{proof}
We will show that $2 \OPT$ is an upper bound for the discrete version of \MSPEC, which proves the theorem.
Let $C$ be the set of edges in the $s$-$t$ cut determined by an optimum solution. For each edge $e =(u,v) \in C$, select the endpoint $v$ of higher power.  Then $p_v \ge \frac{1}{2} w_{u,v}$. For each selected vertex we raise $p_v$ to the maximum $w_{x,v}$ where $(x,v)$ is an edge of $C$ that selects $v$.  Then we at most double the power of $v$. 
Furthermore, any vertex not selected can have its power decreased to 0.  The new power values still activate all edges of $C$ and provide a solution to discrete \MSPEC.
\end{proof}

Let $Z$ be the optimal value of a  solution to the discrete \MSPEC problem on graph $G$.  Clearly $\OPT \leq Z$, and from Theorem \ref{theorem:2-approx}, we have $Z \leq ~ 2 \OPT$. Thus:
\begin{ceqn}
\begin{equation}
     \frac{Z}{2} \leq OPT \leq Z
    \label{eqn:OPT 2-approx}
\end{equation}
\end{ceqn}
We can use these bounds in place of those in Lemma~\ref{lemma:n-approx} to obtain a faster approximation algorithm for \MSPEC.
From Equation~(\ref{eqn:OPT 2-approx}), the maximum power we assign to any vertex is $Z$. If we introduce an error of at most $\alpha = \frac{\varepsilon Z}{2 n}$ at each vertex, the total error will be at most $n \alpha \leq \varepsilon \OPT$.  
We construct the graph $G^\prime$ as in Algorithm \ref{algorithm:construction}, though instead of taking $\lceil \frac{n^2}{\varepsilon} \rceil$ copies of each vertex, we only use $c= \lceil \frac{Z}{\alpha} \rceil = \lceil \frac{2Zn}{\varepsilon Z} \rceil =\lceil \frac{2n}{\varepsilon} \rceil $ copies. 

Algorithm~\ref{algorithm:approximation algorithm mspec} on this new $G^\prime$ is an FPTAS with a running time of $O(|V'|^{1/2}|E'|) $= $O((n\frac{n}{\varepsilon})^{1/2}(m(\frac{n}{\varepsilon})^2))$ $= O(n^3 m \varepsilon^{-2.5})$. Since the time to compute $Z$ is $O(m^3 / \log(m))$, the overall running time of our fast FPTAS is $O(n^3m\varepsilon^{-2.5} + m^3 / \log m)$.
In addition to improving the running time, this algorithm also improves the space complexity by at least $O(n^2)$.


\section{Open Problems}
\label{sec:conclusions}

The biggest open question is to settle the complexity of the \MSPEC problem---is it in P?  NP-hard?
For hardness, a starting point would be to show that \MSPEC with vertex costs is NP-hard.
On the other side, the complexity of the minimum shrinkage problem (i.e.,~the special case of \MSPEC arising from shrinking unit disks in the plane) is also open. 
Can the geometry of unit disk graphs be used to develop a polynomial time algorithm?

\medskip
\noindent\textbf{Acknowledgements.} This work was begun at the Fifth Annual Workshop on Geometry and Graphs, held at the Bellairs Research Institute in Barbados, March 5-10, 2017. We thank the organizers and all participants for the productive and positive atmosphere.
We thank Lap Chi Lau for helpful suggestions.


\bibliographystyle{plain}
\bibliography{biblio}

\end{document}